\newtheorem{definition}{Definition}
\newtheorem{theorem}{Theorem}
\newtheorem{lemma}{Lemma}
\newtheorem{proposition}{Proposition}
\newtheorem{corollary}{Corollary}
\newtheorem{remark}{Remark}
\newcommand{\bs}{\boldsymbol}
\newcommand{\coef}{\mathrm{coef}}
\newcommand{\diff}[1]{d#1}
\newcommand{\deri}[2]{\frac{\diff{#1}}{\diff{#2}}}
\newcommand{\neib}{\mathcal{N}}
\newcommand{\ens}{\mathrm{E}}
\newcommand{\mat}[1]{{\mathbf #1}}
\newcommand{\diag}{\mathrm{diag}}
\newcommand{\expect}{\mathbb{E}}
\newcommand{\Int}[1]{[\![#1]\!] }
\begin{document}

\sloppy

\title{
Cutsize Distributions of Balanced Hypergraph Bipartitions for Random Hypergraphs
}

\author{
\IEEEauthorblockN{Takayuki Nozaki}
\IEEEauthorblockA{
Yamaguchi University, JAPAN\\
Email: tnozaki@yamaguchi-u.ac.jp
}
}
\maketitle

\begin{abstract}
In a previous work, we presented a parallel encoding algorithm for low-density parity-check (LDPC) codes by partitioning hypergraph representation for the LDPC codes.
The aim of this research is to analyze the processing time of this encoding algorithm.
This paper clarifies that the processing time of the encoding algorithm depends on the minimum cutsize of balanced hypergraph partitions.
Moreover, this paper gives the typical minimum cutsize and cutsize distribution for balanced hypergraph bipartitions of random hypergraphs defined from a regular LDPC ensemble.
\end{abstract}

\section{Introduction}
Low-density parity-check (LDPC) code \cite{Gallager_LDPC} is a linear code defined by a sparse parity check matrix $\mathbf{H}$.
The encoding algorithms generate the codeword ${\bs x} = ({\bs p}, {\bs m})$ from a give message ${\bs m}$.
Since ${\bs 0} = \mathbf{H}{\bs x}^T = \begin{pmatrix} \mathbf{H}_P & \mathbf{H}_I \end{pmatrix} ({\bs p}, {\bs m})^T$, 
the encoding algorithm solves a system of linear equations
\begin{equation}
 \mathbf{H}_P {\bs p}^T = - \mathbf{H}_I {\bs m}^T. \label{eq:enc}
\end{equation}
Hence, if we can transform the given parity check matrix to a matrix $\begin{pmatrix} \mathbf{H}_P & \mathbf{H}_I \end{pmatrix}$ such that \eqref{eq:enc} is efficiently solved, we can obtain an efficient encoding algorithm.

We presented an efficient parallel encoding algorithm for LDPC codes by transforming $\mathbf{H}_P$ into a block diagonal matrix $\diag[\mathbf{H}_{P,1}, \mathbf{H}_{P,2},\dots, \mathbf{H}_{P,K}]$ \cite{nozaki2015}.
More precisely, this algorithm breaks down the system of linear equations \eqref{eq:enc} into $K$ systems of linear equations and parallelly solves those $K$ systems of linear equations.
We showed in \cite{nozaki2015} that the total number of operations of this encoding algorithm approximately equals to Richardson and Urbanke's (RU) encoding algorithm \cite{RU_eff}.
Since this encoding algorithm simultaneously solves $K$ systems of linear equations,
the processing time of the encoding algorithm is $1/K$ of the RU encoding algorithm.

The aim of this research is to analyze the processing time of this encoding algorithm.
In other words, we would like to analyze the parallel degree $K$, where the maximum of $K$ depends on the given parity check matrix.

This paper clarifies that $K$ depends on the minimum cutsize in balanced partitions for the hypergraph representation to $\mathbf{H}$.
However, it is known that the balanced hypergraph partitioning problem, which divides the vertices of a hypergraph into $K$ almost equal size parts, is NP-hard \cite{lengauer1990}.
Hence, it is difficult to calculate the minimum cutsize for a given hypergraph representation to $\mathbf{H}$.

In this paper, we take a coding theoretic approach to evaluate the minimum cutsize: (1) considering a random hypergraph ensemble, (2) deriving the ensemble average of balanced partitions with a given cutsize, i.e, deriving the cutsize distribution, (3) analyzing the growth rate for the cutsize distribution, and (4) clarifying the typical minimum cutsize for the hypergraph ensemble.
In other words, we use a similar technique to derive minimum distance for the LDPC ensembles \cite{1302293}.
In this paper, we derive the typical minimum cutsize of balanced bipartitions, i.e, $K=2$, for random hypergraph ensemble defined from regular LDPC ensemble, as a first step of the research.


As related works, Wadayama et al.\ \cite{fujii2012coding, yano2012probabilistic,fujii2013analysis} analyzed random graphs by using coding theoretic approaches.
Dembo et al.\ \cite{dembo2015extremal} evaluated cutsize in random {\it graph} bisections.

The remainder of the paper is organized as follows.
Section \ref{sec:pre} gives hypergraph representation for LDPC codes and balanced hypergraph partitioning, and introduces regular LDPC ensembles and corresponding random hypergraph ensemble.
Section \ref{sec:enc} derives a necessary condition for $K$-parallel encodable LDPC codes.
In other words, we will show that $K$ depends on the minimum cutsize in hypergraph partitions in Section \ref{sec:enc}.
Section \ref{sec:cutdist} gives the cutsize distribution of balanced hypergraph bipartitions for the hypergraph ensembles.
Section \ref{sec:tmimcut} analyzes the groth rate of the cutsize distribution and typical minimum cutsize of balanced hypergraph bipartitions for the hypergraph ensembles.
Section \ref{sec:conc} concludes the paper.

\section{Preliminaries \label{sec:pre}}
This section introduces LDPC codes, hypergraph partitioning and random hypergraph ensemble.

\subsection{Hypergraph Representation for LDPC code \label{ssec:hrep}}
This section introduces three representations for LDPC codes, namely, parity check matrix, Tanner graph and hypergraph, and gives the relationship between Tanner graph and hypergraph \cite{nozaki2015}.

The Tanner graph $\mathtt{G} = (\mathtt{V}\cup \mathtt{C}, \mathtt{E})$ for a LDPC code is represented by a bipartite graph with the set of variable nodes $\mathtt{V}$, check nodes $\mathtt{C}$ and edges $\mathtt{E}$.
For a given $m\times n$ parity check matrix $\mat{H} = (h_{i,j})$,
the $j$-th variable node $\mathtt{v}_j$ and $i$-th check node $\mathtt{c}_i$ are connected iff $h_{i,j}\neq 0$, i.e, $(\mathtt{v}_j,\mathtt{c}_i)\in\mathtt{E} \iff h_{i,j} \neq 0$.
In other words, the $i$-th check node (resp.\ $j$-th variable node) in Tanner graph $\mathtt{G}$ corresponds to the $i$-th row (resp.\ $j$-th column) of parity check matrix $\mat{H}$.

Let $\mathcal{U}$ be a finite set, and let $\mathcal{E}$ be a family of non-empty subsets of $\mathcal{U}$.
The pair $\mathcal{H} = (\mathcal{U}, \mathcal{E})$ is called {\it hypergraph} with the set of vertices $\mathcal{U}$ and the set of {\it nets} (or {\it hyperedges}) $\mathcal{E}$.
If the $i$-th node $u_i\in\mathcal{U}$ is in the $j$-th net $e_j\in\mathcal{E}$, i.e, $u_i\in e_j$, the vertex $u_i$ is {\it connected} to $e_j$.
For a given $m\times n$ matrix $\mat{H}=(h_{i,j})$, the hypergraph representation $\mathcal{H}_{\mat{H}} = (\mathcal{U},\mathcal{E})$ is constructed in the following way.
The number of vertices $|\mathcal{U}|$ is $m$ and the number of nets $|\mathcal{E}|$ is $n$.
The vertex $u_i$ is connected to the net $e_j$ iff $h_{i,j}\neq 0$, i.e, $u_i\in e_j \iff h_{i,j}\neq 0$.
In other words, the $i$-th vertex (resp.\ $j$-th net) corresponds to the $i$-th row (resp.\ $j$-th column).

By summarizing above, if we transform the variable nodes (resp.\ check nodes) in $\mathtt{G}$ to nets (resp.\ vertices), we can obtain the hypergraph representation $\mathcal{H}$ for the LDPC code defined by the Tanner graph $\mathtt{G}$.

\subsection{Balanced Hypergraph Partitioning \cite{780863}}
A family $\Pi_K = \{\mathcal{U}_1, \mathcal{U}_2, \dots, \mathcal{U}_K\}$ of non-empty subsets of $\mathcal{U}$ is a {\it $K$-way hypergraph partition} of $\mathcal{H}=(\mathcal{U},\mathcal{E})$ if the followings are satisfied:
\begin{itemize}
\item 
Each pair of parts is disjoint, i.e, $\mathcal{U}_i\cap \mathcal{U}_j = \emptyset$ for all $1\le i < j \le K$.
\item
Union of $K$ parts is equal to $\mathcal{U}$, i.e, $\bigcup_{i=1}^{K} \mathcal{U}_i = \mathcal{U}$.
\end{itemize}
In particular, two-way hypergraph partition is called {\it hypergraph bipartition}.

For a fixed partition $\Pi_K$, if a net $e\in\mathcal{E}$ connects to a node $u$ in a part $\mathcal{U}_i$, we call that the net $e$ connects to the part $\mathcal{U}_i$.
Denote the set of nets connecting to a part $\mathcal{U}_i$, by $\mathcal{N}(\mathcal{U}_i)$.
A net is called {\it cut} if the net connects to more than one parts.
For a fixed partition $\Pi_K$, the set of cuts is called {\it cut set} and denoted by $\mathcal{X}(\Pi_K)$.
The {\it cutsize} of $\Pi_K$ is given by $|\mathcal{X}(\Pi_K)|$.

We denote the set of integers between $a$ and $b$, by
\begin{equation*}
 \Int{a,b} := \{k\in \mathbb{Z} \mid a\le k \le b\}.
\end{equation*}
A partition is $\epsilon$-{\it balanced} \cite{780863} if the following holds:
\begin{equation*}
 \max_{i\in\Int{1,K}} |\mathcal{U}_i| \le \frac{|\mathcal{U}|}{K}(1+\epsilon),
\end{equation*}
where $\epsilon \ge 0$ represents the predetermined maximum imbalance ratio.
We denote a $K$-way $\epsilon$-balanced partition, by $\Pi_{K}^{(\epsilon)}$.
In particular, a partition is {\it exactly balanced} if $\epsilon = 0$.

\subsection{Regular LDPC Ensemble and Random Hypergraph}
For a given $n, \gamma, \delta$, an LDPC ensemble $\ens(n,\gamma,\delta)$ is defined by the following way.
There exist $n$ variable nodes of degree $\gamma$ and 
$m$ check nodes of degree $\delta$.
A node of degree $i$ has $i$ sockets for its connected edges.
Consider a permutation $\pi$ on the number of edges $\xi := \gamma n$.
Join the $i$-th socket on the variable node side to the $\pi(i)$-th socket
on the check node side.
The bipartite graphs are chosen with equal probability
from all the permutations on the number of edges.

From Section \ref{ssec:hrep}, we can generate a hypergraph $\mathcal{H}$ from a Tanner graph $\mathtt{G}$. 
Hence, an LDPC ensemble is regarded as a hypergraph ensemble.
With some abuse of notation, we denote $\mathcal{H}\in\ens(n,\gamma,\delta)$ if the corresponding Tanner graph $\mathtt{G}$ belongs to $\ens(n,\gamma,\delta)$.

\section{Condition for Parallel Encodable \label{sec:enc}}
In this section, we will derive a necessary condition that an LDPC code is $K$ parallel encodable.

\begin{definition}[$K$ parallel encodable] \label{def:penc}
Assume that $K$ integers $m_1,m_2,\dots, m_K$ satisfies $\sum_{i} m_i = m$ and $\max_i m_i \le (1+\epsilon)m/K$.
For a given parity check matrix $\mat{H}$, 
an LDPC code is $K$ {\it parallel encodable by block-diagonalization} if
there exists a pair of permutation matrices $\mat{P}, \mat{Q}$ such that
\begin{equation}
 \mathbf{P}\mathbf{H}\mathbf{Q}
  =
 \begin{pmatrix}
  \mathbf{H}_P & \mathbf{H}_I
 \end{pmatrix}
  =
 \begin{pmatrix}
  \mathbf{H}_{P,1} & & \mathbf{O} & \mathbf{H}_{I,1}\\
              &\ddots &  & \vdots \\
  \mathbf{O} & & \mathbf{H}_{P,K} & \mathbf{H}_{I,K}
 \end{pmatrix}, \label{eq:penc}
\end{equation}
and $\mat{H}_{P,i}$ is a non-singular $m_i \times m_i$ matrix for $i=1,2,\dots, K$.
\end{definition}

Split the parity part ${\bs p}$ into $K$ parts $({\bs p}_1, {\bs p}_2,\dots, {\bs p}_K)$, where the length of ${\bs p}_i$ is $m_i$.
Then, we obtain the parity part ${\bs p} = ({\bs p}_1, {\bs p}_2,\dots, {\bs p}_K)$ by parallelly solving the systems of linear equations $\mathbf{H}_{P,i} {\bs p}_i^T = -\mathbf{H}_{I,i} {\bs m}^T$ if the LDPC code $\mat{H}$ is $K$ parallel encodable by block-diagonalization.
Note that the systems of linear equations are almost equal size since $\max_i m_i \le (1+\epsilon)m/K$.

The following proposition gives a necessary condition that $\mat{H}$ is $K$ parallel encodable by block-diagonalization.

\begin{proposition} \label{prop:penc}
If an LDPC code defined by $\mat{H}$ is $K$ parallel encodable by block-diagonalization, the following condition holds:
\begin{equation}
 n-m
  \ge
 \min_{\Pi_K^{(\epsilon)}} |\mathcal{X}(\Pi_K^{(\epsilon)})|. \label{eq:para_cond}
\end{equation}
\end{proposition}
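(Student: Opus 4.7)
The plan is to read off from the block-diagonal factorization in \eqref{eq:penc} an explicit $\epsilon$-balanced $K$-way partition of $\mathcal{H}_{\mat{H}}$ whose cutsize is at most $n-m$; the inequality \eqref{eq:para_cond} then follows immediately by taking the minimum over all $\epsilon$-balanced partitions.

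First I would pin down how the two permutations act on the hypergraph. By the correspondence in Section \ref{ssec:hrep}, rows of $\mat{H}$ correspond to vertices of $\mathcal{H}_{\mat{H}}$ and columns correspond to nets. Thus $\mat{P}$ merely relabels vertices and $\mat{Q}$ merely relabels nets; neither changes the underlying hypergraph up to isomorphism, so it suffices to work with $\mat{P}\mat{H}\mat{Q}$ directly.

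Next I would construct the partition from the block structure. Let $\mathcal{U}_i$ be the set of $m_i$ vertices corresponding to the rows occupied by the $i$-th diagonal block $\mat{H}_{P,i}$, and set $\Pi_K = \{\mathcal{U}_1,\dots,\mathcal{U}_K\}$. Because $\sum_i m_i = m = |\mathcal{U}|$ and $\max_i m_i \le (1+\epsilon)m/K$, this is an $\epsilon$-balanced $K$-way partition. I would then bound its cutsize by inspecting supports of columns: every column in the $i$-th block column of $\mat{H}_P$ has its nonzeros confined to the rows of $\mat{H}_{P,i}$, so the corresponding net lies entirely inside $\mathcal{U}_i$ and is not cut. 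Hence none of the $m$ parity-part nets contributes to $\mathcal{X}(\Pi_K)$, and only the $n-m$ nets coming from the columns of $\mat{H}_I$ can possibly be cut, giving $|\mathcal{X}(\Pi_K)| \le n-m$. Since $\Pi_K$ is a particular $\epsilon$-balanced partition, $\min_{\Pi_K^{(\epsilon)}} |\mathcal{X}(\Pi_K^{(\epsilon)})| \le n-m$, which is exactly \eqref{eq:para_cond}.

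I do not expect a serious technical obstacle here. The only point requiring care is the bookkeeping of the row-to-vertex / column-to-net correspondence, since swapping the two roles would swap the functions of $\mat{P}$ and $\mat{Q}$ and would obscure the fact that block-diagonality of $\mat{H}_P$ is precisely the statement that every parity-part net lies inside a single part. Once that identification is fixed, the argument reduces to reading off column supports and there is no further obstacle.
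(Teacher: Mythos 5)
Your proposal is correct and follows essentially the same route as the paper: build the $K$-way partition from the row blocks of $\mat{P}\mat{H}\mat{Q}$, note it is $\epsilon$-balanced since $|\mathcal{U}_i|=m_i\le(1+\epsilon)m/K$, and observe that the $m$ columns of $\mat{H}_P$ yield nets confined to a single part, so at most $n-m$ nets can be cut. The paper phrases the final count as $m_i\le|\mathcal{N}(\mathcal{U}_i)\setminus\mathcal{X}(\Pi_K^{(\epsilon)})|$ summed over $i$, which is just a rearrangement of your bound $|\mathcal{X}(\Pi_K)|\le n-m$.
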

\begin{proof}
From Definition \ref{def:penc}, if an LDPC code defined by $\mat{H}$ is $K$ parallel encodable by block-diagonalization, a pair of permutation matrices $\mat{P},\mat{Q}$ transforms $\mat{H}$ as \eqref{eq:penc}.
Denote the set of vertices corresponding to $\Int{\sum_{j=1}^{i-1}m_j+1, \sum_{j=1}^i m_j}$ rows of $\mat{P}\mat{H}\mat{Q}$, by $\mathcal{U}_i$.
Since $|\mathcal{U}_i| = m_i$ and $m_i \le (1+\epsilon)m/K$, 
the partition $(\mathcal{U}_1,\mathcal{U}_2,\dots, \mathcal{U}_K)$ is $\epsilon$-balanced.
Let $\mathcal{E}_i$ be the set of nets corresponding to $\Int{\sum_{j=1}^{i-1}m_j+1, \sum_{j=1}^i m_j}$ columns of $\mat{P}\mat{H}\mat{Q}$.
Since the elements in $\mathcal{E}_i$ only connect to $\mathcal{U}_i$,
$\mathcal{E}_i \subseteq \neib(\mathcal{U}_i) \setminus \mathcal{X}(\Pi_K^{(\epsilon)})$ holds.
Noticing that $|\mathcal{E}_i| = m_i$, we have
\begin{equation}
 m_i \le |\mathcal{N}(\mathcal{U}_i)\setminus \mathcal{X}(\Pi_K^{(\epsilon)})|. \notag
\end{equation}
By summing up this equation over $i$, we get
\begin{equation}
 m 
  \le
 \sum_{i=1}^{K}|\mathcal{N}(\mathcal{U}_i)\setminus \mathcal{X}(\Pi_K^{(\epsilon)})| 
  =
 n - |\mathcal{X}(\Pi_K^{(\epsilon)})| \notag
\end{equation}
From this inequation, we obtain \eqref{eq:para_cond}.
\end{proof}

For a fixed $\mat{H}$, $\min_{\Pi_K^{(\epsilon)}} |\mathcal{X}(\Pi_K^{(\epsilon)})|$ does not decrease as $K$ increases.
Hence, there exists the maximum parallel degree $K_{\max} := \max\{K \mid  n-m \ge \min_{\Pi_K^{(\epsilon)}} |\mathcal{X}(\Pi_K^{(\epsilon)})|\}$.
Thus, to analyze the processing time for the parallel encoding algorithm, we need to calculate $\min_{\Pi_K^{(\epsilon)}} |\mathcal{X}(\Pi_K^{(\epsilon)})|$ for a given $\mat{H}$.

However, it is known that the balanced hypergraph partitioning problem is NP-hard \cite{lengauer1990}. 
In other words, it is difficult to calculate the minimum cutsize $\min_{\Pi_K^{(\epsilon)}} |\mathcal{X}(\Pi_K^{(\epsilon)})|$ for a given $\mat{H}$.
Hence, we will analyze the typical minimum cutsize of hypergraph bipartitioning, i.e, $K=2$, for a fixed ensemble $\ens(n,\gamma,\delta)$ in the following sections.

\section{Cutsize Distribution \label{sec:cutdist}}
In this section, we derive cutsize distribution of hypergraph bipartitioning for $\ens(n,\gamma,\delta)$.

\begin{definition}[Cutsize distirbution]
For a hypergraph $\mathcal{H}$, let $A_{\mathcal{H}}(s, m_1)$ be the number of 
bipartitions such that $|\mathcal{X}(\Pi_2)| = s$ and $|\mathcal{U}_1| = m_1$.
For an ensemble $\ens(n,\gamma,\delta)$,
the cutsize distribution $A(s,m_1)$ is the ensemble average of $A_{\mathcal{H}}(s, m_1)$, i.e, 
\begin{align}
 A(s,m_1) 
 &:= \expect_{\mathcal{H}\in\ens(n,\gamma,\delta)}[A_{\mathcal{H}}(s, m_1)]  \notag \\
 &= \frac{1}{\xi !} \sum_{\mathcal{H} \in\ens(n,\gamma,\delta)} A_{\mathcal{H}}(s, m_1).
\end{align}
Similarly, for a hypergraph $\mathcal{H}$, let $B_{\mathcal{H}}(s, \epsilon)$ be the number of $\epsilon$-balanced bipartitions with cutsize $s$.
For an ensemble $\ens(n,\gamma,\delta)$, the cutsize distribution $B(s,\epsilon)$ is defined by the ensemble average of $B_{\mathcal{H}}(s, m_1)$.
\end{definition}

Since the partitions are $\epsilon$-balanced, $|\mathcal{U}_1| = m_1 \le m(1+\epsilon)/2$ and $|\mathcal{U}_2| = m - m_1 \le m(1+\epsilon)/2$ hold.
Hence, $m_1 \in M_\epsilon := \Int{m(1-\epsilon)/2, m(1+\epsilon)/2}$.
Then, $B(s,\epsilon)$ is given by $A(s,m_1)$ as follows:
\begin{equation}
B(s,\epsilon) = \sum_{m_1\in M_{\epsilon}}A(s,m_1). \label{eq:BbyA}
\end{equation}

The following theorem gives the cutsize distribution $A(s,m_1)$ for $\ens(n,\gamma,\delta)$.
\begin{theorem} \label{the:cut_wd}
 For an ensemble $\ens(n,\gamma,\delta)$, the cutsize distribution $A(s,m_1)$ is given as follows:
 \begin{align}
  &A(s,m_1) 
   = 
  \frac{\binom{m}{m_1}  \binom{n}{s}}{\binom{\delta m}{\delta m_1}}
  \coef ( f(u)^n ,u^{\delta m_1}) \notag \\
  &\qquad\qquad\qquad
  \times \mathbb{I}[s \le \delta m_1]\mathbb{I}[s \le \delta (m-m_1)], 
  \label{eq:cut_wd} \\
  &f(u) := p(u)^{s/n} q(u)^{1-s/n}, \\
  &p(u) := (1+u)^{\gamma} - 1 -u^{\gamma},\quad 
  q(u) := 1 + u^{\gamma}. \label{eq:pq}
 \end{align}
 where $\coef(f(x),x^i)$ is the coefficient of $x^i$ in the polynomial $f(x)$ and 
\begin{equation*}
 \mathbb{I}[P]  = 
  \begin{cases}
   1 & \text{if $P$ is true}, \\
   0 & \text{otherwise}.
  \end{cases}
\end{equation*}
\end{theorem}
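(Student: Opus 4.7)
The plan is a double-counting argument. I would evaluate $\sum_{\mathcal{H}\in\ens(n,\gamma,\delta)} A_{\mathcal{H}}(s,m_1)$ by enumerating triples $(\mathcal{H},\Pi_2,\mathcal{S})$ with $|\mathcal{U}_1|=m_1$, $|\mathcal{S}|=s$, and $\mathcal{S}=\mathcal{X}(\Pi_2)$. Because the cut set is uniquely determined by $(\mathcal{H},\Pi_2)$, this triple-count equals the desired sum; dividing by $\xi! = (\gamma n)!$ then gives $A(s,m_1)$.

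First I would fix a bipartition ($\binom{m}{m_1}$ ways) and a candidate cut set $\mathcal{S}\subseteq\mathcal{E}$ with $|\mathcal{S}|=s$ ($\binom{n}{s}$ ways), and then count the number of socket-matchings in $\ens(n,\gamma,\delta)$ that realize exactly this cut configuration. For each net $j$ let $k_j$ denote the number of its $\gamma$ sockets matched into $\mathcal{U}_1$; check-side socket conservation forces $\sum_j k_j = \delta m_1$. A net $j\in\mathcal{S}$ must be truly cut, i.e.\ $1\le k_j\le \gamma-1$, while a net $j\notin\mathcal{S}$ must be entirely on one side, i.e.\ $k_j\in\{0,\gamma\}$. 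For a fixed vector $(k_j)$, the number of compatible permutations is $\prod_j \binom{\gamma}{k_j}\cdot(\delta m_1)!\cdot(\delta(m-m_1))!$, where the binomial selects which of net $j$'s sockets point into $\mathcal{U}_1$ and the two factorials perform the socket-matching within each part.

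Summing over admissible $(k_j)$ with $\sum k_j=\delta m_1$ is exactly a coefficient extraction: each cut net contributes the generating function $\sum_{k=1}^{\gamma-1}\binom{\gamma}{k}u^k = p(u)$, each uncut net contributes $1+u^\gamma = q(u)$, and the total count is $\coef(p(u)^s q(u)^{n-s},u^{\delta m_1}) = \coef(f(u)^n,u^{\delta m_1})$. Multiplying by $\binom{m}{m_1}\binom{n}{s}$, dividing by $(\gamma n)! = (\delta m)!$, and using $(\delta m_1)!(\delta(m-m_1))!/(\delta m)! = 1/\binom{\delta m}{\delta m_1}$ yields the stated formula. The indicators $\mathbb{I}[s\le \delta m_1]$ and $\mathbb{I}[s\le \delta(m-m_1)]$ merely record that the coefficient vanishes unless each side can accommodate at least one edge from every cut net. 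The only delicate point I expect is being strict about removing \emph{both} the $k_j=0$ and the $k_j=\gamma$ terms from $p(u)$ (rather than only $k_j=0$): allowing $k_j=\gamma$ would admit configurations whose true cut set is strictly smaller than the prescribed $\mathcal{S}$, breaking the bijection between triples and hypergraphs and overcounting $A(s,m_1)$.
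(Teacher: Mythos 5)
Your proposal is correct and follows essentially the same argument as the paper: choose the part $\mathcal{U}_1$ ($\binom{m}{m_1}$ ways) and the cut set ($\binom{n}{s}$ ways), use the per-net generating functions $p(u)=\sum_{k=1}^{\gamma-1}\binom{\gamma}{k}u^k$ and $q(u)=1+u^{\gamma}$ to count socket assignments with $\sum_j k_j=\delta m_1$, multiply by $(\delta m_1)!\,(\delta(m-m_1))!$ for the matchings within each part, and divide by $\xi!$. The paper packages the same count as a bivariate coefficient $\coef(a(t,u)^n,t^su^{\delta m_1})$ with $a(t,u)=tp(u)+q(u)$ and phrases it in Tanner-graph language (active sockets), but this is only a notational difference from your direct selection of the cut set.
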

\begin{proof}
Notice that $|\mathcal{U}_2| = m-m_1$.
The number of nets connecting to $\mathcal{U}_1$ (resp.\ $\mathcal{U}_2$) is at most $\delta m_1$ (resp.\ $\delta (m-m_1)$), i.e, $|\neib (\mathcal{U}_1)| \le \delta m_1$ (resp.\ $|\neib (\mathcal{U}_2)| \le \delta (m-m_1)$).
The cutsize $s$ is smaller than $|\neib (\mathcal{U}_1)|$ and $|\neib (\mathcal{U}_2)|$.
Hence $A(s,m_1) = 0$ if $s>\delta m_1$ or $s>\delta (m-m_1)$.

Fix $s$ and $m_1$ with $s \le \delta m_1$ and $s \le \delta (m - m_1)$.
For a fixed $\mathcal{H}\in\ens(n,\gamma,\delta)$ and $\Pi_2 = (\mathcal{U}_1, \mathcal{U}_2)$, let $\mathtt{C}_1$ (resp.\ $\mathtt{C}_2$) be the set of the check nodes corresponding to $\mathcal{U}_1$ (resp.\ $\mathcal{U}_2$) and let $\mathtt{V}_{\mathrm{i}}$ be the set of the variable nodes corresponding to the cut set $\mathcal{X}(\Pi_2)$.
The total number of Tanner graphs such that $|\mathtt{C}_1| = m_1$ and $|\mathtt{V}_{\mathrm{i}}| = s$ equals to $A(s, m_1) \xi!$.
We refer the edges connecting to $\mathtt{C}_1$ as {\it active edges}.
The socket is active if the connecting edge is active.
Since $|\mathtt{C}_1| = m_1$, the number of active edges is $\delta m_1$.

We count the number of constellations for active sockets in the variable node side.
For a variable node $\mathtt{v}$ of degree $\gamma$, let $a_{i,j}$ be the number of constellations for $j$ active sockets in $\mathtt{v}$, where $i=1$ if $\mathtt{v}$ belongs to $\mathtt{V}_{\mathrm{i}}$, otherwise $i=0$.
Note that the variable node $\mathtt{v}$ belongs to  $\mathtt{V}_{\mathrm{i}}$ iff $\mathtt{v}$ has $j\in\Int{1,\gamma-1}$ active sockets.
Hence, we get
\begin{align*}
 &a_{0,j} = \binom{\gamma}{j} \mathbb{I}[1\le j \le \gamma-1], \\
 &a_{1,j} = \mathbb{I}[j=0 \text{~or~} j=\gamma].
\end{align*}
The generating function $a(t,u) := \sum_{i,j} a_{i,j}t^i u^j$ is given as 
\begin{equation}
 a(t,u) 
 = t\{(1+u)^{\gamma}-1-u^{\gamma}\} + (1+u^{\gamma}) 
 =: tp(u) + q(u). \notag
\end{equation}
Thus, the number of constellations of $\delta m_1$ active sockets in variable node side for a given cutsize $s$ is 
\begin{align}
 &\coef(a(t,u)^n, t^s u^{\delta m_1}) \notag \\ 
  &\quad=
 \binom{n}{s} \coef( \{p(u)\}^s\{q(u)\}^{n-s}, u^{\delta m_1})  \notag \\
  &\quad=:
 \binom{n}{s} \coef( f(u)^n, u^{\delta m_1} ). \label{eq:vcons}
\end{align}

The number of choices for the check nodes in $\mathtt{C}_1$ is $\binom{m}{m_1}$.
The number of permutations of active edges (resp.\ non-active edges) is $(\delta m_1)!$ (resp.\ $(\delta m-\delta m_1)!$).
By multiplying those numbers and \eqref{eq:vcons}, we have
\begin{align*}
 A(s,m_1) \xi ! 
  =
 \textstyle{\binom{m}{m_1} (\delta m_1)! (\delta m-\delta m_1)! 
 \binom{n}{s} \coef( f(u)^n, u^{\delta m_1} )},
\end{align*}
for $s \le \delta m_1$ and $s \le \delta (m - m_1)$.
From this equation, we get \eqref{eq:cut_wd}.
\end{proof}

\section{Typical Minimum Cutsize \label{sec:tmimcut}}
In this section, we analyze the asymptotic behavior of the cutsize distributions, namely, the growth rate for the cutsize distributions and the relative typical minimum cutsizes for the ensemble $\ens(n,\gamma,\delta)$.
Firstly, we define the growth rate and the relative typical minimum cutsizes, and explain the meanings of those terms.

\begin{definition}[Growth rate]
Consider $\ens(n,\gamma,\delta)$.
Define the growth rate $g(\sigma, \mu_1)$ and $h(\sigma, \epsilon)$ for the cutsize distributions $A(\sigma n, \mu_1 m)$ and $B(\sigma n, \epsilon)$ as
\begin{align*}
 &g(\sigma, \mu_1) 
  = 
 \lim_{n\to\infty} \frac{1}{n} \log A(\sigma n, \mu_1 m), \\
 &h(\sigma, \epsilon)
  =
 \lim_{n\to\infty} \frac{1}{n} \log B(\sigma n, \epsilon),
\end{align*}
respectively. 
\end{definition}

The expression of the growth rates are given in Section \ref{ssec:gr}.

\begin{remark} \label{rem:1}
From the definition of growth rates, $A(\sigma n,\mu_1 m) \approx 2^{n g(\sigma, \mu_1)}$ and $B(\sigma n, \epsilon) \approx 2^{n h(\sigma, \epsilon)}$ hold.
This implies that $A(\sigma n, \mu_1 m)$ (resp.\ $B(\sigma n, \epsilon)$) is exponentially decreasing for $n$ if $g(\sigma, \mu_1) < 0$ (resp.\ $h(\sigma, \epsilon) < 0$).
Similarly, $A(\sigma n, \mu_1 m)$ (resp.\ $B(\sigma n, \epsilon)$) is exponentially increasing for $n$ if $g(\sigma, \mu_1) > 0$ (resp.\ $h(\sigma, \epsilon) > 0$). 
\end{remark}

\begin{definition}[Relative typical minimum cutsize]
Consider $\ens(n,\gamma,\delta)$.
 Define 
 \begin{align}
 &\alpha^*(\mu_1) := \inf \{\sigma>0 \mid g(\sigma,\mu_1)>0\}, \\
 &\beta^*(\epsilon) := \inf \{\sigma > 0 \mid h(\sigma, \epsilon) > 0\}.
 \end{align}
 We refer the value $\alpha^*(\mu_1)$ and $\beta^*(\epsilon)$ as 
 the {\it relative typical minimum cutsizes} for $\ens(n,\gamma,\delta)$.
\end{definition}

We will show the existence of the relative typical minimum cutsizes in Section \ref{sssec:etmc}.

As discussed in Remark \ref{rem:1}, the number of $\epsilon$-balanced partitions with cutsize $\sigma n$ exponentially decreases as $n\to\infty$ for $\sigma \in [0,\beta^*(\epsilon))$.
In other words, 
the minimum cutsize $\min_{\Pi_2^{(\epsilon)}} |\mathcal{X}(\Pi_2^{(\epsilon)})|$
 is approximated by $\beta^*(\epsilon) n$ for almost all $\mathcal{H} \in \ens(n,\gamma,\delta)$.
Hence, we refer $\beta^*(\epsilon)$ as the typical minimum cutsize.

From the above discussion and Proposition \ref{prop:penc}, we obtain a necessary condition that $\mat{H}\in\ens(n,\gamma,\delta)$ is $K=2$ parallel encodable by the block-diagonalization with high probability.
\begin{proposition} \label{prop:penc_ens}
 If a code $\mat{H}\in\ens(n,\gamma,\delta)$ is $K=2$ parallel encodable by the block-diagonalization with high probability, the following condition holds:
\begin{equation}
 1-\frac{\delta}{\gamma}
  \ge
 \beta^{*}(\epsilon). \label{eq:para_cond2}
\end{equation}
\end{proposition}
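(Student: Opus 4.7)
The plan is to marry the deterministic combinatorial necessary condition of Proposition \ref{prop:penc} (specialized to $K=2$) with a first-moment bound derived from the cutsize distribution $B(s,\epsilon)$ of Section \ref{sec:cutdist}. The former turns parallel encodability into an inequality between the redundancy $n-m$ and the minimum $\epsilon$-balanced cutsize; the latter converts the ``with high probability'' hypothesis into a lower bound on that minimum cutsize in terms of $\beta^{*}(\epsilon)$.

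First I would instantiate Proposition \ref{prop:penc} with $K=2$ on any $\mat{H}\in\ens(n,\gamma,\delta)$ that is $2$-parallel encodable by block-diagonalization, yielding
\begin{equation*}
 \frac{n-m}{n} \ge \frac{1}{n} \min_{\Pi_2^{(\epsilon)}} |\mathcal{X}(\Pi_2^{(\epsilon)})|.
\end{equation*}
The left-hand side is the design rate of the ensemble: from the edge-count identity $\gamma n = \delta m$, the ratio $m/n$ is a constant determined by $\gamma,\delta$, so the left-hand side does not depend on $n$.

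Next I would lower-bound the right-hand side with high probability. For any $\eta>0$, the definition of $\beta^{*}(\epsilon)$ gives $h(\sigma,\epsilon)\le 0$ for all $\sigma<\beta^{*}(\epsilon)$; combined with the strict negativity of $h$ slightly below $\beta^{*}(\epsilon)$, supplied by the growth-rate analysis developed in Section \ref{sec:tmimcut}, this implies $B(s,\epsilon)\le 2^{-\kappa n}$ for some $\kappa=\kappa(\eta)>0$ uniformly over $s\le (\beta^{*}(\epsilon)-\eta)n$. A union bound over the at most $n+1$ admissible values of $s$ combined with Markov's inequality applied to $B_{\mathcal{H}}(s,\epsilon)$ then yields
\begin{equation*}
\Pr\Bigl[ \min_{\Pi_2^{(\epsilon)}}|\mathcal{X}(\Pi_2^{(\epsilon)})| \le (\beta^{*}(\epsilon)-\eta)n \Bigr] \le (n+1)\, 2^{-\kappa n} \to 0.
\end{equation*}

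Combining the two steps, on the intersection of the parallel-encodability event (of probability tending to one by hypothesis) with the complement of the exceptional event above (also of probability tending to one), both inequalities must hold simultaneously, so $1-m/n \ge \beta^{*}(\epsilon) - \eta$ for every $\eta>0$, which gives the stated bound after taking $\eta\downarrow 0$. The main technical obstacle is not the combinatorial step but rather justifying the strict negativity of $h(\sigma,\epsilon)$ on an interval immediately to the left of $\beta^{*}(\epsilon)$: this is precisely what makes the first-moment/Markov bound effective, and it is the job of the asymptotic analysis of the growth rate carried out in Section \ref{sec:tmimcut} to deliver.
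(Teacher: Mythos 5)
Your proposal is correct and follows essentially the same route as the paper: Proposition \ref{prop:penc} with $K=2$ gives $n-m\ge\min_{\Pi_2^{(\epsilon)}}|\mathcal{X}(\Pi_2^{(\epsilon)})|$, and the paper's (largely implicit) argument likewise identifies the right-hand side with $\beta^{*}(\epsilon)n$ for almost all codes via the negativity of the growth rate below $\beta^{*}(\epsilon)$; your version merely makes the first-moment/Markov and union-bound step explicit, which the paper leaves as ``the above discussion.'' Note also that your derivation correctly produces $1-m/n=1-\gamma/\delta$ on the left-hand side, consistent with the paper's Tables \ref{tab:gamma2}--\ref{tab:gamma5}, so the $1-\delta/\gamma$ in the printed statement of \eqref{eq:para_cond2} is evidently a typo rather than a discrepancy in your argument.
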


Thus, the typical minimum cutsize is an important characteristic for parallel encodable by the block-diagonalization.
We will evaluate both the right and left hand sides of \eqref{eq:para_cond2} in Section \ref{ssec:tmc} for some ensembles by numerical examples.

\subsection{Growth Rate \label{ssec:gr}}

From \eqref{eq:BbyA}, we have the following relationship between $g(\sigma,\mu_1)$ and $h(\sigma, \epsilon)$.
\begin{lemma} \label{lem:gh}
 Define $\bar{M}_{\epsilon} := [(1-\epsilon)/2, (1+\epsilon)/2]$.
 Then, the following equation holds for all $\sigma \in [0,1]$:
\begin{equation}
 h(\sigma,\epsilon) 
 = 
 \max_{\mu_1\in \bar{M}_{\epsilon}} g(\sigma, \mu_1). \label{eq:growth_h_rep}
\end{equation}
\end{lemma}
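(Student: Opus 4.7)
The plan is to apply the standard principle that the exponential growth rate of a sum of at most polynomially many nonnegative terms equals that of its largest summand. First I would sandwich $B(\sigma n, \epsilon)$ between
\begin{equation*}
\max_{m_1 \in M_\epsilon} A(\sigma n, m_1) \;\le\; B(\sigma n, \epsilon) \;\le\; |M_\epsilon| \cdot \max_{m_1 \in M_\epsilon} A(\sigma n, m_1),
\end{equation*}
which is immediate from \eqref{eq:BbyA}. Since $|M_\epsilon| \le \epsilon m + 1$ grows only linearly in $n$, taking $\tfrac{1}{n}\log$ and letting $n\to\infty$ annihilates the $|M_\epsilon|$ factor and yields
\begin{equation*}
h(\sigma, \epsilon) = \lim_{n\to\infty} \frac{1}{n} \log \max_{m_1 \in M_\epsilon} A(\sigma n, m_1).
\end{equation*}

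Next I would interchange the limit with the maximum. Writing $m = \gamma n/\delta$, the finite set $\{m_1/m : m_1 \in M_\epsilon\}$ is $O(1/n)$-dense in the compact interval $\bar M_\epsilon$. Combined with continuity of $\mu_1 \mapsto g(\sigma, \mu_1)$ on $\bar M_\epsilon$ and uniform convergence of $\tfrac{1}{n}\log A(\sigma n, \lfloor \mu_1 m\rfloor)$ on this interval, a routine approximation argument then gives
\begin{equation*}
\lim_{n\to\infty} \frac{1}{n} \log \max_{m_1 \in M_\epsilon} A(\sigma n, m_1) = \max_{\mu_1 \in \bar M_\epsilon} g(\sigma, \mu_1),
\end{equation*}
which together with the previous display produces \eqref{eq:growth_h_rep}.

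The main obstacle is justifying that uniform convergence and continuity claim. I would use the explicit formula in Theorem \ref{the:cut_wd}: Stirling's formula converts the prefactor $\binom{m}{m_1}\binom{n}{s}/\binom{\delta m}{\delta m_1}$ into a binary-entropy expression manifestly continuous in $\mu_1$, while the coefficient extraction $\coef(f(u)^n, u^{\delta m_1})$ admits a saddle-point estimate producing a Legendre-type exponent that is also continuous in $\mu_1$ on the admissible interior. Uniformity follows from the resulting equicontinuity of these expressions on any compact subinterval. Since $\epsilon < 1$, both endpoints of $\bar M_\epsilon$ lie strictly inside $(0,1)$, so no boundary degeneracy complicates the argument.
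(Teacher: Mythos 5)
Your proposal is correct and follows exactly the route the paper intends: the paper gives no explicit proof, simply asserting that the lemma follows from \eqref{eq:BbyA} via the standard fact that a sum of polynomially many nonnegative terms has the growth rate of its largest term. Your additional care in justifying the interchange of limit and maximum (via density of $\{m_1/m\}$ in $\bar M_\epsilon$ and continuity/uniformity of the exponent) fills in a step the paper leaves implicit.
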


From \cite[Theorem 2]{1302293}, the following equation holds:
\begin{align*}
 \lim_{n\to\infty} \frac{1}{n}\log \coef (f(u), u^{\alpha n}) 
  = 
 \inf_{u>0} \log \frac{f(u)}{u^{\alpha}} 
\end{align*}
where a point $u$ achieving the infimum satisfies $u \deri{f}{u} = \alpha f(u)$.
Moreover, it is well known that the following equation holds:
\begin{align*}
 \lim_{n\to\infty} \frac{1}{n} \log \binom{\alpha n}{\beta n} 
  =  
 \alpha H_2 (\beta/\alpha),
\end{align*}
where $H_2(x)$ is the binary entropy function, i.e, 
$H_2 (x) = -x \log x -(1-x)\log (1-x)$．

The growth rate $g(\sigma,\mu_1)$ is derived from those equations and Theorem \ref{the:cut_wd} as follows.
\begin{theorem} \label{the:gr}
Assume $\sigma \le \gamma \mu_1$ and $\sigma \le \gamma (1-\mu_1)$.
Define $p(u)$ and $q(u)$ as in \eqref{eq:pq}.
The growth rate $g(\sigma, \mu_1)$ for the ensemble $\ens(n,\gamma,\delta)$ is given as
\begin{align} 
 &g(\sigma,\mu_1)
  =
  H_2(\sigma)
 -\gamma\frac{\delta-1}{\delta} H_2 (\mu_1) \notag \\
 &\qquad\qquad+\inf_{u>0} \{ \sigma \log p(u) + (1-\sigma) \log q(u) - \mu_1 \gamma \log u \}.
  \label{eq:gamma} 
\end{align}
A point $u$ achieving the infimum satisfies
\begin{align}
  \sigma  u p'(u) q(u) + (1-\sigma) u p(u) q'(u) 
  = \mu_1 \gamma p(u) q(u), \label{eq:cond_u2}
\end{align}
where $p'(u) := \deri{p}{u}$.
\end{theorem}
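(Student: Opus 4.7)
The plan is to take the closed form for $A(s,m_1)$ from Theorem \ref{the:cut_wd}, substitute $s=\sigma n$, $m_1=\mu_1 m$ and $m=\gamma n/\delta$, and then pass to the limit $n\to\infty$ using the two asymptotic identities stated just above the theorem. The hypotheses $\sigma \le \gamma\mu_1$ and $\sigma \le \gamma(1-\mu_1)$ ensure the two indicator factors in (\ref{eq:cut_wd}) are $1$ for all sufficiently large $n$, so they contribute nothing.

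I would first apply the asymptotic $\frac{1}{n}\log\binom{\alpha n}{\beta n}\to\alpha H_2(\beta/\alpha)$ to the three binomial coefficients, obtaining limits $\frac{\gamma}{\delta}H_2(\mu_1)$, $H_2(\sigma)$ and $\gamma H_2(\mu_1)$, respectively. Combined with the signs in (\ref{eq:cut_wd}), these collapse via the identity $\frac{\gamma}{\delta} - \gamma = -\gamma\frac{\delta-1}{\delta}$ into the first two terms of (\ref{eq:gamma}).

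For the coefficient-extraction factor, since $s=\sigma n$ is an integer, $f(u)^n = p(u)^s q(u)^{n-s}$ is an honest polynomial in $u$ with nonnegative coefficients. Writing the target exponent as $\delta m_1 = \mu_1\gamma n$ and invoking the cited identity with $\alpha = \mu_1\gamma$ yields
\begin{equation*}
\frac{1}{n}\log\coef\bigl(f(u)^n,u^{\mu_1\gamma n}\bigr) \to \inf_{u>0}\bigl\{\sigma\log p(u) + (1-\sigma)\log q(u) - \mu_1\gamma\log u\bigr\},
\end{equation*}
which is the infimum term in (\ref{eq:gamma}). The stationarity condition (\ref{eq:cond_u2}) then follows by differentiating this bracketed expression with respect to $u$, setting the derivative to $0$, and clearing denominators by multiplying through by $u\,p(u)\,q(u)$.

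The main obstacle is the coefficient-extraction step: one must verify that $\mu_1\gamma n$ lies in the exponent range actually supported by the polynomial $p(u)^s q(u)^{n-s}$, for otherwise the coefficient is zero and its logarithm is ill-defined. Since $p$ has nonzero terms from degree $1$ through $\gamma-1$ and $q$ from degree $0$ through $\gamma$, this supported range is $[s,\gamma n - s]$, and the hypotheses $\sigma \le \gamma\mu_1$ and $\sigma \le \gamma(1-\mu_1)$ are precisely what place $\mu_1\gamma n$ inside this interval, simultaneously guaranteeing that the infimum is finite and attained at an interior point $u>0$.
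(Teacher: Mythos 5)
Your proposal is correct and matches the derivation the paper intends: the paper gives no explicit proof but states that Theorem~\ref{the:gr} follows by applying the two asymptotic identities (for $\coef$ and for binomial coefficients) to the exact expression in Theorem~\ref{the:cut_wd}, which is precisely your computation, including the collapse $\frac{\gamma}{\delta}H_2(\mu_1)-\gamma H_2(\mu_1)=-\gamma\frac{\delta-1}{\delta}H_2(\mu_1)$ and the stationarity condition obtained from $u\,\deri{f}{u}=\alpha f(u)$ with $f=p^{\sigma}q^{1-\sigma}$ and $\alpha=\mu_1\gamma$. Your added check that the hypotheses $\sigma\le\gamma\mu_1$ and $\sigma\le\gamma(1-\mu_1)$ place the target exponent $\mu_1\gamma n$ inside the degree range $[s,\gamma n-s]$ of $p(u)^{s}q(u)^{n-s}$ is a correct and worthwhile justification that the paper leaves implicit.
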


Combining Lemma \ref{lem:gh} and Theorem \ref{the:gr}, the growth rate $h(\sigma,\epsilon)$ is expressed as
\begin{align}
 &h(\sigma,\epsilon) 
  =
   \max_{\mu_1 \in \bar{M}_{\epsilon}}\biggl[ H_2(\sigma)
 -\gamma\frac{\delta-1}{\delta} H_2 (\mu_1) \notag \\
 &\quad\qquad+\inf_{u>0} \{ \sigma \log p(u) + (1-\sigma) \log q(u) - \mu_1 \gamma \log u \} \biggr]. \label{eq:hgr}
\end{align}

\subsubsection{Existence of Typical Minimum Cutsize \label{sssec:etmc}}
In this section, we show that there exists a relative typical minimum cutsize $\alpha^*(\mu_1), \beta^{*}(\epsilon)$ for $\ens(n,\gamma,\delta)$.

\begin{lemma}\label{lem:minu}
 For $\ens(n,\gamma,\delta)$, the growth rates at $\sigma = 0$ are given as
\begin{align}
 &g(0,\mu_1)
  =
 \Bigl( 1- \gamma\frac{\delta-1}{\delta} \Bigr) H_2 (\mu_1),
 \label{eq:gamma_0} \\
 &h(0,\epsilon)
  =
 \Bigl(1-\gamma\frac{\delta-1}{\delta} \Bigr) H_2 \Bigl(\frac{1-\epsilon}{2}\Bigr).
 \label{eq:hgamma_0}
\end{align}
\end{lemma}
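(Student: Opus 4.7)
The plan is to substitute $\sigma = 0$ directly into the growth-rate formula of Theorem \ref{the:gr} and evaluate the resulting one-dimensional infimum explicitly, then invoke Lemma \ref{lem:gh} for the second claim. Setting $\sigma = 0$ in \eqref{eq:gamma} and using $H_2(0) = 0$, the first term vanishes and the $\sigma\log p(u)$ term drops out, leaving
\[
g(0,\mu_1) = -\gamma\frac{\delta-1}{\delta} H_2(\mu_1) + \inf_{u>0}\{\log q(u) - \mu_1 \gamma \log u\},
\]
with $q(u) = 1 + u^\gamma$, so everything reduces to computing this infimum.

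For the infimum, I would differentiate $\phi(u) := \log(1+u^\gamma) - \mu_1\gamma\log u$ in $u$; the stationarity condition $\gamma u^{\gamma-1}/(1+u^\gamma) = \mu_1\gamma/u$ simplifies to $u^\gamma = \mu_1/(1-\mu_1)$, which is the specialization of \eqref{eq:cond_u2} at $\sigma = 0$. A brief boundary check, $\phi(u)\to +\infty$ as $u\to 0^+$ and as $u\to\infty$ (since $0<\mu_1<1$), confirms this stationary point is the global minimum. Substituting $u^\gamma = \mu_1/(1-\mu_1)$ gives $\log q(u) = -\log(1-\mu_1)$ and $\mu_1\gamma\log u = \mu_1\log\mu_1 - \mu_1\log(1-\mu_1)$, so the infimum evaluates to $-(1-\mu_1)\log(1-\mu_1) - \mu_1\log\mu_1 = H_2(\mu_1)$. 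Adding this to the prefactor $-\gamma(\delta-1)H_2(\mu_1)/\delta$ produces \eqref{eq:gamma_0}.

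For $h(0,\epsilon)$, I would apply Lemma \ref{lem:gh} to get $h(0,\epsilon) = \max_{\mu_1\in\bar{M}_\epsilon} c\cdot H_2(\mu_1)$, where $c := 1 - \gamma(\delta-1)/\delta$. In the regime of interest for regular LDPC ensembles ($\gamma\ge 2$, $\delta\ge 2$), one has $c\le 0$, so maximizing $c\,H_2(\mu_1)$ on $\bar{M}_\epsilon = [(1-\epsilon)/2,(1+\epsilon)/2]$ is equivalent to minimizing the concave function $H_2$ on an interval symmetric about its mode $1/2$. The minimum is attained at either endpoint, and by $H_2(x) = H_2(1-x)$ both endpoints give the common value $H_2((1-\epsilon)/2)$, yielding \eqref{eq:hgamma_0}. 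The main obstacle is the algebraic evaluation of the infimum in the first step, together with noting the sign of $c$ (needed to locate the maximizer at the endpoints rather than at $\mu_1 = 1/2$); once both are in hand, the lemma follows by direct substitution.
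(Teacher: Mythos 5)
Your proof is correct and follows the route the paper clearly intends (the lemma is stated without proof): specialize Theorem \ref{the:gr} at $\sigma=0$, solve the stationarity condition $u^{\gamma}=\mu_1/(1-\mu_1)$ to show the infimum equals $H_2(\mu_1)$, and then use Lemma \ref{lem:gh} together with the sign of $1-\gamma(\delta-1)/\delta$ to place the maximizer at the endpoint $\mu_1=(1-\epsilon)/2$. Your explicit observation that the endpoint (rather than $\mu_1=1/2$) is the maximizer precisely because the prefactor is nonpositive is a worthwhile detail that the paper leaves implicit.
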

\begin{lemma} \label{lem:posi}
 Assume $\ens(n,\gamma,\delta)$. 
 For a fixed $\mu_1$, the maximum of growth rate $g(\sigma, \mu_1)$ is achieved at $\sigma^+ := 1-(1-\mu_1)^{\gamma}-\mu_1^{\gamma}$ and the maximum value is 
 \begin{equation}
  g(\sigma^+,\mu_1) = \frac{\gamma}{\delta}H_2(\mu_1) > 0.
 \end{equation}
\end{lemma}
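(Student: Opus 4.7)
The plan is to find the critical point of $g(\sigma,\mu_1)$ in $\sigma$ by combining the envelope-theorem first-order condition with the stationarity condition \eqref{eq:cond_u2} from Theorem \ref{the:gr}, then substitute back to obtain the claimed value.

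First I would observe that $g(\sigma,\mu_1)$ is concave in $\sigma$ on $[0,1]$: the term $H_2(\sigma)$ is concave, and the infimum over $u>0$ of an affine function of $\sigma$ (with $u$-dependent coefficients) is concave. Hence any interior critical point is the global maximum. To compute it, let $u = u(\sigma,\mu_1)$ denote the minimizer inside \eqref{eq:gamma}. By the envelope theorem,
\begin{equation*}
\frac{\partial g}{\partial \sigma} = \log\frac{1-\sigma}{\sigma} + \log\frac{p(u)}{q(u)}.
\end{equation*}
Setting this to zero yields $(1-\sigma)\,p(u) = \sigma\,q(u)$, equivalently
\begin{equation*}
\sigma = \frac{p(u)}{p(u)+q(u)}, \qquad 1-\sigma = \frac{q(u)}{p(u)+q(u)}.
\end{equation*}

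Next I would plug these expressions into the stationarity condition \eqref{eq:cond_u2}. After clearing the common factor $p(u)q(u)$, the equation reduces to
\begin{equation*}
\frac{u\,[p'(u)+q'(u)]}{p(u)+q(u)} = \mu_1\gamma.
\end{equation*}
Using $p(u)+q(u) = (1+u)^\gamma$ and observing that the $u^{\gamma-1}$ terms cancel in $p'(u)+q'(u)$, one gets $p'(u)+q'(u) = \gamma(1+u)^{\gamma-1}$. The displayed equation therefore simplifies to $u/(1+u) = \mu_1$, so the optimal point is $u^\star = \mu_1/(1-\mu_1)$. Substituting back, $(1+u^\star)^\gamma = (1-\mu_1)^{-\gamma}$ and $(u^\star)^\gamma = \mu_1^\gamma(1-\mu_1)^{-\gamma}$, which gives
\begin{equation*}
\sigma^+ = 1 - \frac{1+(u^\star)^\gamma}{(1+u^\star)^\gamma} = 1 - (1-\mu_1)^\gamma - \mu_1^\gamma,
\end{equation*}
matching the claim.

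Finally I would evaluate $g(\sigma^+,\mu_1)$. Writing $A := (1-\mu_1)^\gamma$, $B := \mu_1^\gamma$, one checks $p(u^\star) = \sigma^+/A$ and $q(u^\star) = (1-\sigma^+)/A$, so
\begin{equation*}
\sigma^+\log p(u^\star) + (1-\sigma^+)\log q(u^\star) = -H_2(\sigma^+) - \gamma\log(1-\mu_1).
\end{equation*}
Combining with $-\mu_1\gamma\log u^\star = -\gamma\mu_1\log\mu_1 + \gamma\mu_1\log(1-\mu_1)$, the $H_2(\sigma^+)$ terms cancel and the remaining terms collapse to $\gamma H_2(\mu_1)$; subtracting $\gamma\frac{\delta-1}{\delta}H_2(\mu_1)$ leaves $\frac{\gamma}{\delta}H_2(\mu_1)$. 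Positivity is immediate since $H_2(\mu_1)>0$ on $(0,1)$.

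The main obstacle is mostly bookkeeping: ensuring that the envelope-theorem differentiation in $\sigma$ is legitimate (which follows from concavity and smoothness of the infimand in the interior) and that the cancellation $p'(u)+q'(u)=\gamma(1+u)^{\gamma-1}$ is applied correctly so the two stationarity conditions decouple into the explicit solution $u^\star=\mu_1/(1-\mu_1)$.
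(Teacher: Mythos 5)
Your proof is correct; every step checks out. The paper states this lemma without proof, so there is nothing to compare against, but your analytic route --- concavity of $g(\cdot,\mu_1)$ as $H_2$ plus an infimum of affine functions, the envelope-theorem condition $(1-\sigma)p(u)=\sigma q(u)$, the decoupling via $p(u)+q(u)=(1+u)^\gamma$ and $p'(u)+q'(u)=\gamma(1+u)^{\gamma-1}$ to get $u^\star=\mu_1/(1-\mu_1)$, and the back-substitution in which the $H_2(\sigma^+)$ terms cancel --- is a complete and valid derivation of both the location $\sigma^+$ and the value $\frac{\gamma}{\delta}H_2(\mu_1)$. Two minor remarks. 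First, strict positivity requires $\mu_1\in(0,1)$, which is the only regime of interest but worth stating. Second, there is a shortcut that confirms (and arguably explains) the answer: since $\sum_s A_{\mathcal{H}}(s,m_1)=\binom{m}{m_1}$ for every $\mathcal{H}$, the maximum of $g(\sigma,\mu_1)$ over $\sigma$ must equal $\lim_n \frac1n\log\binom{m}{\mu_1 m}=\frac{\gamma}{\delta}H_2(\mu_1)$ (the max and the sum differ by a polynomial factor), and the maximizing $\sigma$ is the expected fraction of cut nets under a uniformly random $m_1$-subset, namely $1-\mu_1^\gamma-(1-\mu_1)^\gamma$. Your explicit computation is consistent with this and additionally verifies the internal coherence of Theorem \ref{the:gr}.
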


Assume $\gamma \ge 2$ and $\delta \ge 3$.
Then, $g(0,\mu_1) \le 0$ and $h(0,\epsilon) \le 0$ hold from \eqref{eq:gamma_0} and \eqref{eq:hgamma_0}, respectively.
Note that the growth rates $g(\sigma, \mu_1)$ and $h(\sigma,\epsilon)$ are continuous functions for $\sigma$.
Hence, from  Lemmas \ref{lem:minu} and \ref{lem:posi}, the following proposition holds.
\begin{proposition} \label{prop:zero}
Assume $\ens (n,\gamma,\delta)$ with $\gamma\ge 2$ and $\delta \ge 3$.
For a fixed $\mu_1$, there exist $\sigma_0 \in [0,\sigma^+]$ such that $g(\sigma_0,\mu_1) = 0$.
Similarly, for a fixed $\epsilon$, there exist $\sigma_0 \in [0,\sigma^+]$ such that $h(\sigma_0,\epsilon) = 0$.
\end{proposition}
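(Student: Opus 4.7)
The plan is to apply the intermediate value theorem to the continuous functions $g(\cdot,\mu_1)$ and $h(\cdot,\epsilon)$ on the interval $[0,\sigma^+]$, using Lemma~\ref{lem:minu} to sign the left endpoint and Lemma~\ref{lem:posi} to sign the right endpoint. The continuity of these functions in $\sigma$ is observed in the paragraph immediately preceding the proposition, so I would take it as given.

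First I would verify that $g(0,\mu_1)\le 0$ and $h(0,\epsilon)\le 0$ under the hypotheses $\gamma\ge 2$ and $\delta\ge 3$. From \eqref{eq:gamma_0} and \eqref{eq:hgamma_0}, the sign of each value is determined by the common prefactor $1-\gamma(\delta-1)/\delta$, since the entropy factor is nonnegative. The hypotheses force $\gamma(\delta-1)/\delta \ge 2\cdot 2/3 = 4/3 > 1$, so the prefactor is strictly negative, giving the required nonpositivity at $\sigma=0$.

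Next I would read the right endpoint from Lemma~\ref{lem:posi}: $g(\sigma^+,\mu_1)=(\gamma/\delta)H_2(\mu_1)$, which is strictly positive whenever $\mu_1\in(0,1)$. For $h$, combining Lemma~\ref{lem:gh} with Lemma~\ref{lem:posi} applied at the specific choice $\mu_1=1/2\in\bar{M}_\epsilon$ yields $h(\sigma^+,\epsilon)\ge (\gamma/\delta)H_2(1/2) = \gamma/\delta > 0$ at the corresponding $\sigma^+ = 1-2^{1-\gamma}$. The intermediate value theorem on $[0,\sigma^+]$ then supplies the desired zero $\sigma_0$ in each case.

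I do not anticipate a substantive obstacle: the nontrivial analysis is already packaged into Lemmas~\ref{lem:minu} and~\ref{lem:posi}, and what remains is the sign check on the prefactor together with an IVT invocation. The only bookkeeping concern is the degenerate case $\mu_1\in\{0,1\}$, where $\sigma^+=0$ and $g(0,\mu_1)=0$ trivially, so one simply picks $\sigma_0=0$; the analogous precaution handles any degenerate $\epsilon$.
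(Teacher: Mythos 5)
Your proposal is correct and is essentially the paper's own argument: the paper derives Proposition~\ref{prop:zero} directly from Lemmas~\ref{lem:minu} and~\ref{lem:posi} together with the continuity of the growth rates in $\sigma$, i.e., exactly the sign check at $\sigma=0$ (via the prefactor $1-\gamma(\delta-1)/\delta<0$ for $\gamma\ge 2$, $\delta\ge 3$), positivity at $\sigma^+$, and the intermediate value theorem. Your additional care with the $h$ case (taking $\mu_1=1/2\in\bar{M}_\epsilon$ via Lemma~\ref{lem:gh}) and the degenerate endpoints $\mu_1\in\{0,1\}$ only makes explicit what the paper leaves implicit.
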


\subsubsection{A Closed Form Lower Bound}
In this section, we give a closed form lower bound for the growth rate $h(\sigma, \epsilon)$ given in \eqref{eq:hgr}.
Lemma \ref{lem:gh} gives a lower bound of $h(\sigma,\epsilon)$.
\begin{corollary} \label{cor:lower}
 The growth rate $h(\sigma,\epsilon)$ is lower bounded as
 \begin{equation}
  h(\sigma, \epsilon) > h(\sigma, 0) = g(\sigma, 1/2).
 \end{equation}
\end{corollary}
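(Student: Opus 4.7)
The plan is to read this off directly from Lemma~\ref{lem:gh}, which is the only non-trivial ingredient. That lemma writes
\[
h(\sigma,\epsilon) = \max_{\mu_1 \in \bar{M}_\epsilon} g(\sigma,\mu_1),
\qquad \bar{M}_\epsilon = \bigl[(1-\epsilon)/2,\,(1+\epsilon)/2\bigr],
\]
so the whole argument will be a monotonicity observation: the set $\bar{M}_\epsilon$ over which we maximise grows with $\epsilon$, so $h(\sigma,\epsilon)$ is non-decreasing in $\epsilon$.

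The concrete steps I would carry out are the following. First, specialise $\epsilon=0$: since $\bar{M}_0=\{1/2\}$ is a single point, the maximum collapses and $h(\sigma,0)=g(\sigma,1/2)$, which is the right-hand equality of the corollary. Next, for arbitrary $\epsilon\ge 0$, note that $1/2\in\bar{M}_\epsilon$, so $g(\sigma,1/2)$ is one of the candidate values in the maximum defining $h(\sigma,\epsilon)$. Therefore
\[
h(\sigma,\epsilon) \;=\; \max_{\mu_1\in\bar{M}_\epsilon} g(\sigma,\mu_1)
\;\ge\; g(\sigma,1/2) \;=\; h(\sigma,0),
\]
which is the claimed bound.

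The only subtle point is that the statement is written with a strict inequality ``$h(\sigma,\epsilon) > h(\sigma,0)$'' rather than ``$\ge$''. My argument as above yields only $\ge$, and indeed at $\epsilon=0$ equality is forced by the second equality in the displayed line; so I would interpret the ``$>$'' in the corollary as a non-strict inequality (as is customary for an ``ensemble lower bound'' statement) and state the result that way. If strictness for $\epsilon>0$ were really required, one would additionally have to argue that the function $\mu_1\mapsto g(\sigma,\mu_1)$ is not identically equal to its value at $\mu_1=1/2$ on $\bar{M}_\epsilon$; this would follow from Lemma~\ref{lem:posi}, which identifies the maximiser $\sigma^+(\mu_1)=1-(1-\mu_1)^\gamma-\mu_1^\gamma$ with value $(\gamma/\delta)H_2(\mu_1)$ depending non-trivially on $\mu_1$, but this strict refinement does not seem to be needed anywhere later.

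The main ``obstacle'' is thus not mathematical but notational: keeping the set-inclusion $\bar{M}_0\subseteq\bar{M}_\epsilon$ in mind and making clear that the lower bound is simply the restriction of the max in Lemma~\ref{lem:gh} to the single point $\mu_1=1/2$. No further calculation is required.
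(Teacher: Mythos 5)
Your proof is correct and is exactly the argument the paper intends: the corollary is stated as an immediate consequence of Lemma~\ref{lem:gh}, using that $\bar{M}_0=\{1/2\}\subseteq\bar{M}_\epsilon$ so the maximum over $\bar{M}_\epsilon$ dominates $g(\sigma,1/2)=h(\sigma,0)$. Your observation that the stated strict inequality should really be ``$\ge$'' (with equality forced at $\epsilon=0$) is a fair catch of a minor imprecision in the paper, not a gap in your argument.
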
 
The growth rate $g(\sigma, 1/2)$ has the following closed form expression.
\begin{lemma} \label{lem:g.5}
 For $\ens(n, \gamma,\delta)$, 
\begin{align} 
  g(\sigma, 1/2) 
   =
  H_2(\sigma) + \sigma \log (2^{\gamma-1}-1) -\gamma \frac{\delta-1}{\delta} + 1.
 \label{eq:gamma.5}
\end{align}
\end{lemma}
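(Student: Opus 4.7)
The plan is to substitute $\mu_1 = 1/2$ into Theorem~\ref{the:gr} and identify the minimizer of the univariate function
\[
 F(u) := \sigma \log p(u) + (1-\sigma)\log q(u) - \tfrac{\gamma}{2}\log u
\]
explicitly. Since $H_2(1/2) = 1$, the constant contribution from the entropy term collapses to $-\gamma(\delta-1)/\delta$, and \eqref{eq:gamma.5} would follow once we compute $\inf_{u>0} F(u) = 1 + \sigma\log(2^{\gamma-1}-1)$.

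The key observation is a symmetry under $u \mapsto 1/u$. A direct algebraic check shows $p(u) = u^{\gamma}p(1/u)$ (since $(1+u)^{\gamma} = u^{\gamma}(1+1/u)^{\gamma}$) and $q(u) = u^{\gamma} q(1/u)$. Substituting $u \mapsto 1/u$ in $F$ therefore yields $F(1/u) = \sigma\log p(u) + (1-\sigma)\log q(u) - \gamma\log u + \tfrac{\gamma}{2}\log u = F(u)$. Consequently $u = 1$ is a stationary point, and I would verify directly that it satisfies the first-order condition \eqref{eq:cond_u2} with $\mu_1 = 1/2$: at $u=1$ we have $p(1) = 2^{\gamma}-2$, $q(1)=2$, $p'(1) = \gamma(2^{\gamma-1}-1)$, $q'(1)=\gamma$, and both sides of \eqref{eq:cond_u2} equal $2\gamma(2^{\gamma-1}-1)$.

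Once $u = 1$ is established as the relevant critical point, evaluation is immediate: $\log p(1) = \log(2^{\gamma}-2) = 1 + \log(2^{\gamma-1}-1)$, $\log q(1) = 1$, and $\log 1 = 0$, giving $F(1) = \sigma(1 + \log(2^{\gamma-1}-1)) + (1-\sigma) = 1 + \sigma\log(2^{\gamma-1}-1)$. Adding $H_2(\sigma)$ and $-\gamma(\delta-1)/\delta$ produces \eqref{eq:gamma.5}.

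The main obstacle is to argue that the stationary point $u=1$ is the infimum rather than merely a critical point. I expect this to be handled by the convexity of $\log f(u) - \alpha \log u$ in $\log u$ for the class of generating functions appearing in \cite{1302293} (which underlies Theorem~\ref{the:gr}): the coefficient-extraction asymptotics there guarantees a unique interior minimizer, and the symmetry argument above pins it down to $u = 1$. If more rigor is desired, one can check that $F''(1) > 0$ by computing the second derivatives of $\log p$ and $\log q$ at $u=1$ directly, but I would leave this as a routine verification.
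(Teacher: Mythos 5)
Your proof is correct and follows the route the paper intends: substitute $\mu_1=1/2$ into Theorem~\ref{the:gr}, observe that $u=1$ solves the stationarity condition \eqref{eq:cond_u2} (your $u\mapsto 1/u$ symmetry via $p(u)=u^{\gamma}p(1/u)$, $q(u)=u^{\gamma}q(1/u)$ is a clean way to see this), and evaluate $\log p(1)=1+\log(2^{\gamma-1}-1)$, $\log q(1)=1$. One small note: your final fallback of checking $F''(1)>0$ would only certify a local minimum; the real justification that $u=1$ gives the infimum is the one you already state, namely that $\sigma\log p(e^t)+(1-\sigma)\log q(e^t)-\tfrac{\gamma}{2}t$ is convex in $t=\log u$ because $p$ and $q$ have nonnegative coefficients, so the stationary point is global.
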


From Corollary \ref{cor:lower} and Lemma \ref{lem:g.5}, the growth rate $h(\sigma, \epsilon)$ is lower bounded by the left hand side of \eqref{eq:gamma.5}.
Moreover, Lemma \ref{lem:g.5} shows that the growth rate for the exactly balanced bipartitioning is written in a closed form.

\begin{figure}
 \begin{center}
 \includegraphics[width=.8\linewidth]{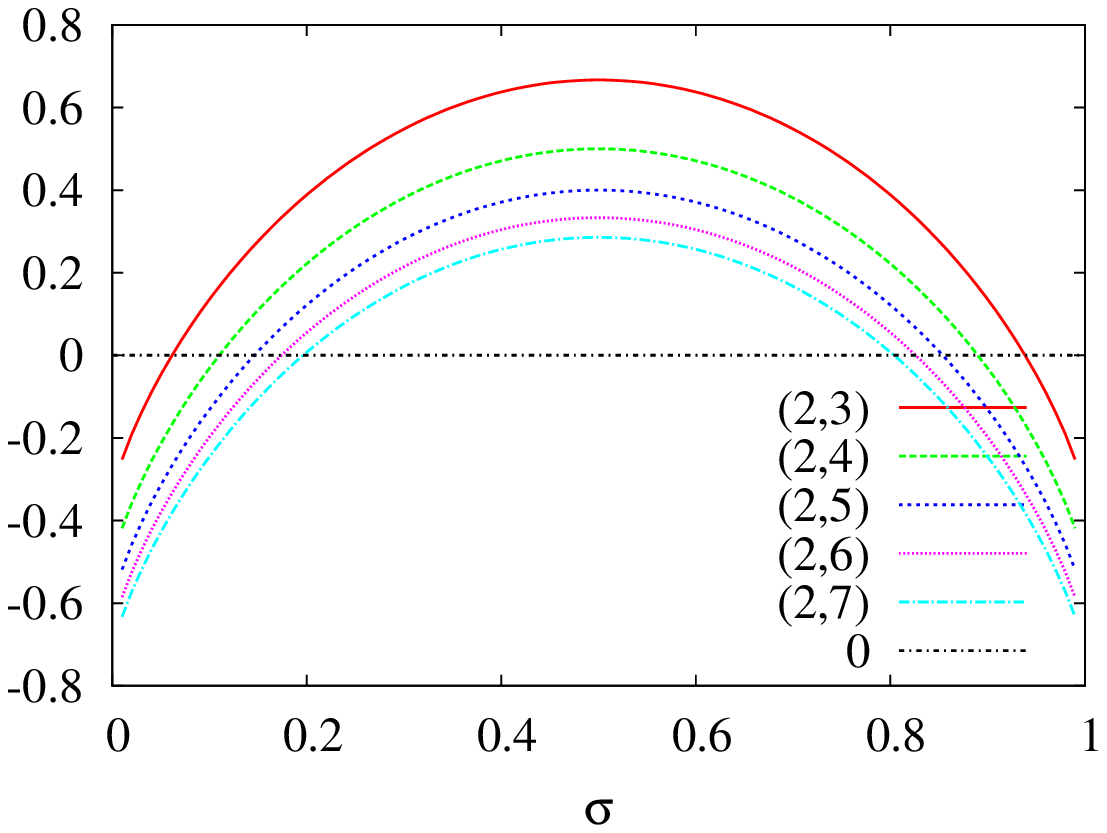}
 \caption{Growth rate $h(\sigma,0)$ for $\ens(2,\delta)$ with $\delta \in \Int{3,7}$ \label{fig:2d}}
 \includegraphics[width=.8\linewidth]{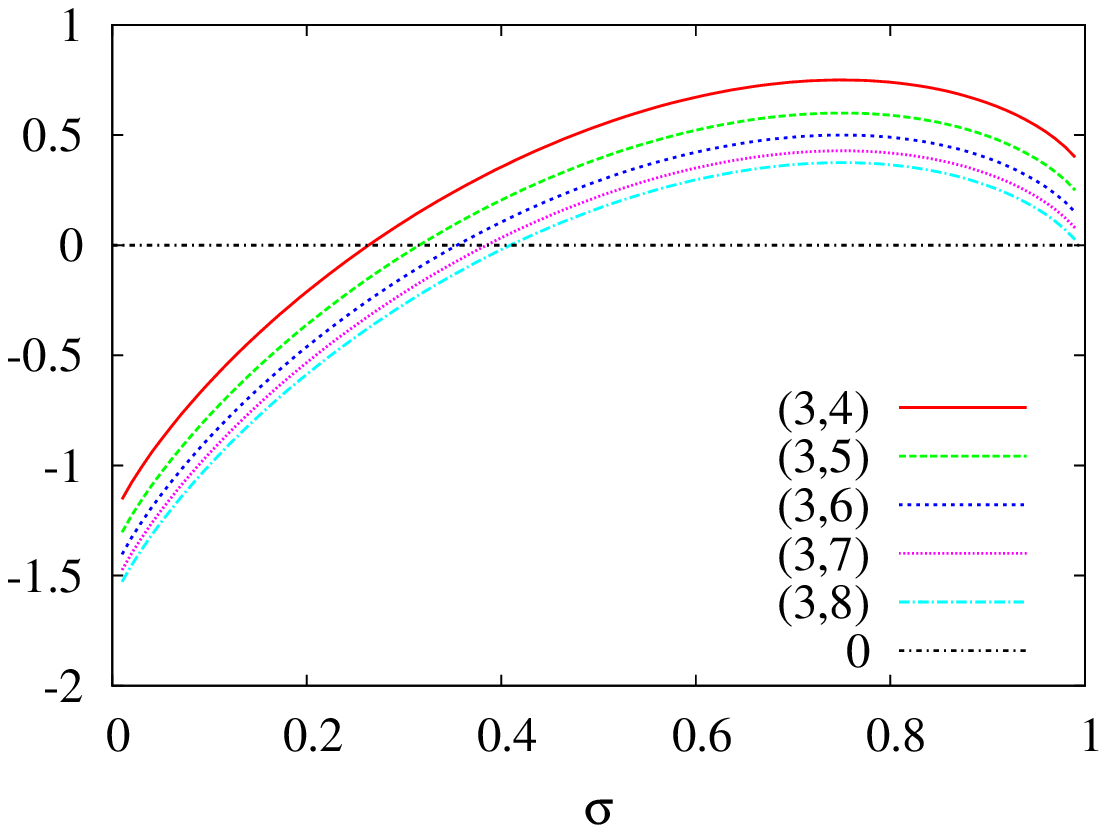}
 \caption{Growth rate $h(\sigma,0)$ for $\ens(3,\delta)$ with $\delta \in \Int{4,8}$ \label{fig:3d}}
 \end{center}
\end{figure}

Now, we plot the growth rate $h(\sigma, 0)$ for several ensembles.
Figures \ref{fig:2d} and \ref{fig:3d} plot the growth rate $h(\sigma,0)$ for $\ens(2,\delta)$ with $\delta \in \Int{3,7}$ and for $\ens(3,\delta)$ with $\delta \in \Int{4,8}$, respectively, by using \eqref{eq:gamma.5}.
Figures \ref{fig:2d} and \ref{fig:3d} show that the relative typical minimum cutsizes are strictly positive.
Moreover, the relative typical minimum cutsize monotonically increases as $\delta$ increases.
The maximum of growth rate $h(\sigma, 0)$ is achieved at $\sigma = 1/2$ and $\sigma = 3/4$ for $\ens(2,\delta)$ and $\ens(3,\delta)$, respectively.
Those agree with Lemma \ref{lem:posi}.

\subsection{Typical Minimum Cutsize \label{ssec:tmc}}
In this section, we evaluate both the right and left hand sides of \eqref{eq:para_cond2} for some ensembles by numerical examples.
In other words, the numerical examples in this section examine whether the ensemble $\ens(n,\gamma,\delta)$ satisfies the necessary condition given in Proposition \ref{prop:penc_ens}.

\begin{table}
\begin{center}
 \caption{ The left and right hand sides of \eqref{eq:para_cond2} for $\gamma=2$ \label{tab:gamma2}}
 \begin{tabular}{|c||r|r|r|r|r|r|} \hline
$\delta$ & 3 & 4 & 5 & 6 & 7 & 8  \\ \hline
$1-\gamma/\delta$ &
 0.3333 & 0.5000 & 0.6000 & 0.6667 & 0.7142 & 0.7500 \\ \hline
$\beta^{*}(0)$ &
 0.0615 & 0.1100 & 0.1461 & 0.1740 & 0.1962 & 0.2145 \\ \hline
\end{tabular}
 \caption{The left and right hand sides of \eqref{eq:para_cond2} for $\gamma=3$ \label{tab:gamma3}}
  \begin{tabular}{|c||r|r|r|r|r|r|r|} \hline
   $\delta$   & 4 & 5 & 6 & 7 & 8 & 9 \\ \hline
   $1-\gamma/\delta$ &
0.2500 & 0.4000 & 0.5000 & 0.5714 & 0.6250 & 0.6667 \\ \hline
   $\beta^{*}(0)$ & 
0.2636 & 0.3157 & 0.3545 & 0.3849 & 0.4094 & 0.4297 \\ \hline
  \end{tabular}
 \caption{The left and right hand sides of \eqref{eq:para_cond2} for $\gamma=5$ \label{tab:gamma5}} 
  \begin{tabular}{|c||r|r|r|r|r|r|r|r|} \hline
   $\delta$   & 6 & 10 & 15 & 20 & 21 & 25 \\ \hline
   $1-\gamma/\delta$ &
   0.1667 & 0.5000 & 0.6667 & 0.7500 & 0.7619 & 0.8000 \\ \hline
   $\beta^{*}(0)$  & 
   0.5570 & 0.6589 & 0.7193 & 0.7537 & 0.7589 & 0.7764  \\ \hline
  \end{tabular}
\end{center}
\end{table}

Tables \ref{tab:gamma2}, \ref{tab:gamma3} and \ref{tab:gamma5} shows the left and right hand sides of \eqref{eq:para_cond2}, i.e, $1-\gamma/\delta$ and $\beta^{*}(0)$, for $\gamma=2,3,5$, respectively.
Table \ref{tab:gamma2} shows that the ensemble $\ens(n,2,\delta)$ satisfies the necessary condition given in Proposition \ref{prop:penc_ens} for any $\delta\ge 3$.
Similarly, the ensembles $\ens(n,3,\delta)$ for $\delta \ge 5$ and the ensembles $\ens(n,5,\delta)$ for $\delta \ge 21$ satisfy the necessary condition given in Proposition \ref{prop:penc_ens} from Table \ref{tab:gamma3} and \ref{tab:gamma5}.

On the other hand, the ensembles $\ens(n,3,4)$ and $\ens(n,5,\delta)$ with $\delta \le 20$ do not satisfy the necessary condition.
In other words, we cannot parallelize the encoding algorithm by the block-diagonalization for almost all codes in those ensembles.

\section{Conclusion \label{sec:conc}}
This paper has investigated a necessary condition of parallel encodable by the block-diagonalization for the regular LDPC ensembles.
We have shown that the necessary condition depends on the minimum cutsize of the balanced hypergraph partitioning.
We have analyzed the cutsize distributions, growth rates and typical minimum cutsizes for the random hypergraphs generated by the regular LDPC ensembles.

\section*{Acknowledgment}
This work was supported by JSPS KAKENHI Grant Number 16K16007.

\bibliographystyle{IEEEtran}
\bibliography{IEEEabrv,nozaki_bib}

\begin{thebibliography}{10}
\providecommand{\url}[1]{#1}
\csname url@samestyle\endcsname
\providecommand{\newblock}{\relax}
\providecommand{\bibinfo}[2]{#2}
\providecommand{\BIBentrySTDinterwordspacing}{\spaceskip=0pt\relax}
\providecommand{\BIBentryALTinterwordstretchfactor}{4}
\providecommand{\BIBentryALTinterwordspacing}{\spaceskip=\fontdimen2\font plus
\BIBentryALTinterwordstretchfactor\fontdimen3\font minus
  \fontdimen4\font\relax}
\providecommand{\BIBforeignlanguage}[2]{{%
\expandafter\ifx\csname l@#1\endcsname\relax
\typeout{** WARNING: IEEEtran.bst: No hyphenation pattern has been}%
\typeout{** loaded for the language `#1'. Using the pattern for}%
\typeout{** the default language instead.}%
\else
\language=\csname l@#1\endcsname
\fi
#2}}
\providecommand{\BIBdecl}{\relax}
\BIBdecl

\bibitem{Gallager_LDPC}
R.~G. Gallager, \emph{Low {D}ensity {P}arity {C}heck {C}odes}.\hskip 1em plus
  0.5em minus 0.4em\relax in Research Monograph series, MIT Press, Cambridge,
  1963.

\bibitem{nozaki2015}
T.~Nozaki, ``Parallel encoding algorithm for {LDPC} codes based on
  block-diagonalization,'' in \emph{Proc. 2015 {IEEE} Int. Symp. Inf. Theory
  (ISIT)}, Jun. 2015, pp. 1911--1915.

\bibitem{RU_eff}
T.~Richardson and R.~Urbanke, ``Efficient encoding of low-density parity-check
  codes,'' \emph{IEEE Transactions on Information Theory}, vol.~47, no.~2, pp.
  638--656, Feb. 2001.

\bibitem{lengauer1990}
T.~Lengauer, \emph{Combinatorial algorithms for integrated circuit
  layout}.\hskip 1em plus 0.5em minus 0.4em\relax John Wiley \& Sons, Inc.,
  1990.

\bibitem{1302293}
D.~Burshtein and G.~Miller, ``Asymptotic enumeration methods for analyzing
  {LDPC} codes,'' \emph{IEEE Transactions on Information Theory}, vol.~50,
  no.~6, pp. 1115--1131, Jun. 2004.

\bibitem{fujii2012coding}
Y.~Fujii and T.~Wadayama, ``A coding theoretic approach for evaluating
  accumulate distribution on minimum cut capacity of weighted random graphs,''
  in \emph{Proc. Int. Symp. on Inf. Theory and its
  Applications(ISITA2012)}.\hskip 1em plus 0.5em minus 0.4em\relax IEEE, Oct.
  2012, pp. 332--336.

\bibitem{yano2012probabilistic}
A.~Yano and T.~Wadayama, ``Probabilistic analysis of the network reliability
  problem on a random graph ensemble,'' in \emph{Proc. Int. Symp. on Inf.
  Theory and its Applications(ISITA2012)}.\hskip 1em plus 0.5em minus
  0.4em\relax IEEE, Oct. 2012, pp. 327--331.

\bibitem{fujii2013analysis}
Y.~Fujii and T.~Wadayama, ``An analysis on minimum st cut capacity of random
  graphs with specified degree distribution,'' in \emph{Proc. 2013 {IEEE} Int.
  Symp. Inf. Theory (ISIT)}.\hskip 1em plus 0.5em minus 0.4em\relax IEEE, 2013,
  pp. 2895--2899.

\bibitem{dembo2015extremal}
A.~Dembo, A.~Montanari, and S.~Sen, ``Extremal cuts of sparse random graphs,''
  \emph{arXiv preprint arXiv:1503.03923}, 2015.

\bibitem{780863}
U.~V. \c{C}ataly\"{u}rek and C.~Aykanat, ``Hypergraph-partitioning-based
  decomposition for parallel sparse-matrix vector multiplication,'' \emph{IEEE
  Transactions on Parallel and Distributed Systems}, vol.~10, no.~7, pp.
  673--693, Jul 1999.

\end{thebibliography}

\end{document}